\documentclass[authoryear,1p, 11pt]{elsarticle}



\usepackage{fancybox,graphicx,dsfont,pgf}
\usepackage{amsmath,amsthm,amsfonts,amssymb,fancybox,float}
\usepackage{subfigure,color,wrapfig}

\usepackage{soul}
 \floatstyle{ruled}
\restylefloat{table}
  \floatstyle{ruled}
    \floatplacement{table}{htp}
\numberwithin{equation}{section}

\newcommand{\mX}{\mathbf{X}}

\newcommand{\mB}{\mathbf{B}}
\newcommand{\mC}{\mathbf{C}}

\newcommand{\mV}{\mathbf{V}}

\newcommand{\mQ}{\mathbf{Q}}

\def\RR{{\mathbb R}}


\newcommand{\la}{\lambda}
\newcommand{\La}{\Lambda}


\newcommand{\E}{\mathds{E}}
\renewcommand{\Pr}{\mathds{P}}

\newtheorem{theorem}{Theorem}

\newtheorem{lemma}{Lemma}

\theoremstyle{definition}

\theoremstyle{remark}
\newtheorem{remark}{Remark}[section]

\usepackage{amssymb}

\setlength{\footskip}{50pt}

\journal{Theoretical Population Biology}

\begin{document}

\begin{frontmatter}

\title{Separation of the largest eigenvalues in eigenanalysis of  genotype data from discrete subpopulations}

\author[label1]{Katarzyna Bryc\corref{cor1}}
\ead{kbryc@genetics.med.harvard.edu}
\author[label2]{Wlodek  Bryc}
\ead{Wlodzimierz.Bryc@uc.edu}
\author[label3]{Jack W. Silverstein}
\ead{jack@ncsu.edu}
\address[label1]{Department of Genetics, Harvard Medical School, Boston, MA 02115, USA}
\address[label2]{Department of Mathematical Sciences, University of Cincinnati, PO Box 210025, Cincinnati, OH 45221--0025, USA}
\address[label3]{Department of Mathematics, Box 8205, North Carolina State University, Raleigh, NC 27695-8205, USA}
\cortext[cor1]{Corresponding author, \textit{Phone:} 617-432-1101}

\begin{abstract}
We present a mathematical model, and the corresponding mathematical analysis, that justifies and quantifies the use of principal component analysis of biallelic genetic marker data for a set of individuals to detect the number of subpopulations represented in the data.  We indicate that the power of the technique relies more on the number of individuals genotyped than on the number of  markers. %
\end{abstract}

\begin{keyword}

Principal Components Analysis \sep Eigenanalysis \sep Population Structure \sep Eigenvalues \sep Number of Subpopulations

\end{keyword}

\end{frontmatter}

\section{Introduction}

Principal component analysis (PCA) has been a powerful and efficient method for analyzing large datasets in population genetics since its early applications by Cavalli-Sforza and others \citep{menozzi1978synthetic, cavalli1993demic, cavalli1994history}. In particular, PCA of single nucleotide polymorphism (SNP) genotype data can be used to illuminate population structure \citep{Nelson2008},  provide insights into human history and admixture \citep{novembre2008genes, McVean:2009}, and help to estimate the number of distinct subpopulations within a sample \citep{Patterson:2006}.

A good estimate for the number of subpopulations, $K$, is needed in Bayesian clustering algorithms such as \textit{STRUCTURE} \citep{Falush2003} or \textit{ADMIXTURE} \citep{Alexander2009}, where one must specify \textit{a priori} the number of clusters in the data, which affects the inferred relationships among individuals \citep{waples2006invited,latch2006relative}. \textcolor{black}{Likewise, the number of subpopulations  is informative of} how many principal components are capturing \textcolor{black}{meaningful} substructure within the data, rather than stochastic noise. Both clustering methods and PCA have been applied to learn about populations from a wide variety of species, including humans~\citep{Rosenberg2002}, \textcolor{black}{chickens~\citep{Rosenberg2001}}, cows~\citep{Gibbs2009}, canines~\citep{Ostrander2005}, and arabidopsis~\citep{Pico2008}.

In this paper, we provide additional mathematical confirmation for the use of PCA in estimating the number of subpopulations within a sample. In a related result \cite[Theorem 3]{Patterson:2006} that motivated this research, the authors analyze the theoretical centered covariance matrix for a single marker as the number of individuals increases without bound.  Here we analyze a mathematically more complicated object: the sample covariance matrix based on multiple markers. In current practice, the sample covariance matrix is often centered, and the data rows are often further  normalized.  In contrast to previous work, our results describe behavior of the eigenvalues of the sample covariance matrix \textit{without} centering or normalization, taking into account both the number of individuals and the number of markers. The raw unprocessed covariance matrix is more amenable to mathematical analysis, and the singular values of such raw data exhibit quantifiable properties that can be used directly to determine the number of subpopulations in the data in an almost deterministic fashion, at least when the number of individuals in the study is sufficiently large.

We show that for large data sets of individuals from $K$ well-differentiated subpopulations, with overwhelming probability the un-centered sample covariance matrix has  $K$ large eigenvalues.  (The technical meaning of ``well differentiated subpopulations" is that matrix $\mQ$, which we later define in equation \eqref{QQQ} from the pairwise moments of pairwise  site spectra, is non-singular.) These \textcolor{black}{``large''} eigenvalues, which indicate the presence of population structure, are greater by a factor proportional to the number of individuals than the remaining smaller eigenvalues. The large eigenvalues arise from the mixed moments of the pairwise site frequency spectra \textcolor{black}{stemming from the presence of multiple subpopulations}, while the small eigenvalues are attributed to random differences between the individuals in the sample. In practice, with finite populations, we can detect only the eigenvalues that are well separated from zero where the cutoff described in equation \eqref{Threshold} is beyond the boundary of the range of the many smaller eigenvalues, which we will refer to as the ``bulk'' of the eigenvalues.

We note that the {\em eigenvectors} of a sample covariance matrix are also interesting, but notoriously difficult to analyze mathematically, so this paper is devoted solely to understanding the eigenvalues. We believe that our model makes minimal assumptions about the distribution of the entries of the data matrix, and should work well in practice for genotype data.

\section{Methods}
 \label{Sec:patterson}

In setting up the mathematical model, we begin as in \citet{Patterson:2006}. We consider unrelated diploid individuals with independent biallelic markers. We assume that  the data for our biallelic markers are recorded in a large $M\times N$ rectangular array $\mC$ with rows labeled by individuals and columns labeled by polymorphic markers.  The entries $C_{i,j}$ are the number of variant alleles for marker $j$, individual $i$, that take values $0,1$ or $2$.
We assume that we have data for $M$ individuals from $K$ subpopulations, and that we have $M_r$ individuals from the subpopulation labeled $r$ so that $M=M_1+M_2+\dots+M_K$. Often, neither $K$ nor $M_1,\dots,M_K$ are known, so we may wish to estimate the value of $K$, the number of subpopulations in the data.
If the population sampling information were known, namely, that individual $i$ is from subpopulation $r$,  the genotype probabilities for marker $j$, $\Pr(C_{i,j}=0,1,2)$ would be given by the expected allele frequencies in subpopulation $r$, as follows: %
\begin{eqnarray}\label{Hardy-Weinberg}
\Pr(C_{i,j}=0)&=&\left(1-p_r(j)\right)^2+F_{r,j}\, p_r(j)(1-p_r(j)),\\
 \Pr(C_{i,j}=1)&=&2p_r(j)(1-p_r(j))(1-F_{r,j}),\\
 \Pr(C_{i,j}=2)&= &p_r(j)^2+ F_{r,j} p_r(j)(1-p_r(j)) \label{Hardy-Weinberg3}.
\end{eqnarray}
where $p_r(j)$ is the allele frequency of marker $j$ in subpopulation $r$.
For additional flexibility we use an auxiliary set of population parameters $F_{r,j}$ that take values between $0$ and $1$. When $F_{r,j}=F_r$ has the same value for all  markers $j$,  then  $F_r$ is an average inbreeding coefficient of the $r$-th subpopulation and our formulas coincide with \cite[Eqn. (9)]{Wright:1943fk}. We recall that $F_{r,j}=0$ for populations in Hardy-Weinberg equilibrium.  We write \begin{equation}\label{Fcoeff}
F=\sup_{r,j}F_{r,j}
\end{equation}
for the largest value of the inbreeding parameter.

Our results describe the asymptotic behavior of the singular values of $\mC$ as $N$ increases. We rely on the following ``mathematical model" of how the parameters change as $N$ changes. (Some additional regularity assumptions appear in Section \ref{Sec:Th}.) \textcolor{black}{In general, our derivations rely on describing population parameters such that
each locus or individual is viewed as a random sample from the population of all loci
and individuals}.

\textcolor{black}{Our mathematical theory will depend on the existence of certain limits for our theory to hold.
To begin, f}or any pair of subpopulations labeled by $r,s\in\{1,\dots,K\}$, we assume that there are numbers $m_{r,s}$ such that
\begin{equation}\label{kkk}
m_{r,s}=\lim_{N\to\infty} \frac{1}{N} \sum_{j=1}^N p_r(j)p_s(j)
\end{equation}
\textcolor{black}{These numbers are moments that capture information about allele frequencies of the subpopulations. }

Next we assume that  the number of individuals, $M$, grows proportionally with $N$ so that $M_1(N),\dots,M_K(N)\to\infty$ are such that with $M(N)=M_1(N)+\dots+M_K(N)$, we have $M(N)/N\to d$ for some $d\geq 0$, and
$M_r(N)/M(N)\to c_r>0$ as $N\to\infty$.  (When applying the asymptotic theory to  finite values of $M$ and $N$, we use $d=M/N$ and $c_r=M_r/M$.)

The $K\times K$ array of  deterministic numbers $m_{r,s}$ together with the relative subpopulation sizes  $c_1,\dots,c_r$  are the hidden parameters that enter  our mathematical analysis.
They enter our analysis though
a $K\times K$ (deterministic) symmetric positive matrix  $\mQ$ with entries
\begin{equation}\label{QQQ}
[\mQ]_{r,s}=\sqrt{c_rc_s}m_{r,s},
\end{equation}
where $m_{r,s}$ are given by  \eqref{kkk}. %
We assume that the subpopulations are ``well differentiated" so that $\mQ$ is of full rank with eigenvalues  $\la_1\geq \la_2\geq \dots \la_K>0$. For additional discussion of these assumptions and their relation to the joint site frequency spectrum and other models of how allelic probabilities differ between subpopulations, see Section \ref{Sec:Th}.

In what follows we deviate from the  method of \cite{Patterson:2006}, \textcolor{black}{who center and standardize $\mC$.}  %
Instead, we analyze $\mC$ as a random perturbation of a finite-rank matrix (compare \cite{benaych2011eigenvalues}) and to preserve this mathematical structure we cannot use data-dependent column averages  to center and normalize the entries of the array. So instead
we work directly with the eigenvalues of the uncentered sample covariance matrix $\mC\mC'$.
This is a symmetric square matrix of size $M$, the number of individuals.  We are interested in the behavior of the eigenvalues of $\mC\mC'$ which we write in decreasing  order $\La_1\geq\La_2\geq \dots\geq\La_M$.

In this paper we prove the mathematical properties of an estimator for the number of subpopulations based on the magnitude of these eigenvalues. We  estimate  the number of subpopulations,  $K$,   as the number of eigenvalues larger than the threshold of
\begin{equation}\label{Threshold}
t' = \frac{1+F}{2}\left(\sqrt{M}+\sqrt{N}\right)^2=N\frac{1+F}{2}\left(1+\sqrt{M/N}\right)^2
\end{equation}
Equivalently, for the scaled matrix %
  \begin{equation}\label{X}
  \mX_N=\frac{1}{(\sqrt{M}+\sqrt{N})^2}\mC\mC'.
\end{equation}
we can use the more intuitive threshold:
\begin{equation}\label{ThresholdScaled}
t= \frac{1+F}{2}
\end{equation}
which does not depend on $M,N$. The parameter $F$ used here, as defined by equation
\eqref{Fcoeff}, takes values between $0$ and $1$. %

Hence begins our main result, that depending on the value of $F$, the threshold cutoff $t$ for determining the number of large eigenvalues corresponding to population structure,  is between
 $0.5$ and $1$. If there are $K$ subpopulations present in the data, then as $N$ and $M$ increase without bound (and are subject to certain technical conditions), then with overwhelming probability the smallest $M-K$ eigenvalues of $\mC\mC'$  are smaller than $t'$ from  \eqref{Threshold}.

Furthermore, the consecutive  largest $K$ eigenvalues are typically much larger.
The theoretical prediction for the observed largest  eigenvalues of the normalized  sample covariance matrix  \eqref{X} are
 \begin{equation}\label{La-predicted}
    \La_j\approx  \frac{4 MN \la_j }{\left(\sqrt{M}+\sqrt{N}\right)^2},
    \end{equation}
    where $\la_j$ are the eigenvalues of matrix $\mQ$ introduced in  \eqref{QQQ}.
 So in evaluating whether an eigenvalue of $\mX_N$ corresponds to population structure, we are effectively comparing a constant between $0.5$ and $1$ (depending on the value of $F$),  to a number larger than $\la_K M$. Of course, $\la_K M$ can be made arbitrarily large by increasing the number of individuals $M$, making it possible to resolve which eigenvalues correspond to population structure.
From our mathematical analysis, we show that the number of subpopulations is estimated essentially without error when  $M\la_K$ is larger than $1$. More specifically,
the   accuracy of this estimator of $K$ depends on the theoretical predicted smallest subpopulation eigenvalue:
\begin{equation}\label{eta}
L=\frac{4MN \la_K}{\left(\sqrt{M}+\sqrt{N}\right)^2}
\end{equation}
 for the rescaled matrix  $\mX_N$ from equation \eqref{X}.
This theoretical value indicates whether there is likely to be power to detect the full population substructure present in the data, if, for example, $L > 0.5$.
Indeed, our simulations confirm that our estimate of population structure works very well whenever $L$ is larger than $0.5$ (when $F=0$),  or $L$ is larger than $1$ (when $F=1$).   In practice, subpopulation parameter  $\la_K$ (the smallest eigenvalue of matrix $\mQ$ from equation~\eqref{QQQ}) and thereby $L$, is a hidden parameter that  depends on the theoretical hidden subpopulation moments and on the unknown relative proportions $c_1,c_2,\dots,c_K$ of the subpopulations present  in the data, and cannot be obtained for non-simulated datasets.

In view of this strong separation,
the eigenvalues of $\mC\mC'$, can be safely used in exploratory data analysis without need for a formal statistical test to assess significance of structure when $M$ is large enough.

A check for the appropriateness of the cutoff is provided by the histogram of the eigenvalues: the $K$ largest eigenvalues should be separated from the remaining eigenvalues, or the bulk. Under the model of clean population substructure, the remaining eigenvalues  should cluster together into a fairly solid group, as these eigenvalues correspond to random differences among individuals. Ideally, one expects the shape of the  bulk to be a single-mode semi-elliptical mass {with sharp boundaries} like the Marchenko-Pastur law \cite[Chapter 3]{bai2010spectral}.
After normalization \eqref{X}, the distribution of the bulk should be located to the left of $(1+F)/2$.
\subsection{Robust to violations of assumptions}
In our analysis we explore possible violations of our key assumptions, namely, independence among markers and stochastic independence of individuals drawn from a subpopulation.

For our mathematical derivations, we assume that each marker is independent. However, genetic markers on the same chromosome are inherited together, leading to a non-random correlation of markers, or ``linkage disequilibrium'' (LD). Simulations indicate that LD does not strongly affect our ability to detect population structure, though it does violate our assumptions. In our view, thinning the data by removing one SNP from each pair of highly correlated markers (such as via the LD-pruning implemented in \textit{PLINK}) is a simple yet robust technique that addresses linkage disequilibrium violations of independence assumptions, that works without the need for corrections described in \citep{Patterson:2006,shriner2012improved}.  Our formulas show that our accuracy is not significantly impacted by reduction of the number of markers $N$, making thinning a useful technique for correcting for strong LD.

\textcolor{black}{Formula \eqref{eta} is informative about the size of the dataset required to detect substructure. Namely, it relates the theoretical magnitude of the eigenvalue to the dataset sample sizes of $M$ individuals and $N$ markers, for a fixed value of $\lambda _K$ (corresponding to the theoretical population separation). }
\textcolor{black}{For example,   when $M=1,000$ individuals and $N=500, 000$ markers,
we have $L=3664.9 \cdot \lambda _K$. Using this equation, we can explore how reducing the size of the dataset impacts the ability to detect substructure. Thinning the number of markers   to $N'=100, 000$, gives $L=3305.8 \cdot \lambda _K$.  A more stringent 10-fold thinning to $N''=50, 000$,  gives $L=3070.2 \cdot \lambda _K$, so if $\la_K$ is, say, larger than $0.0002$, thinning will not have much influence on the resolution.  To illustrate how resolution is affected by the number of individuals, we remark that the effects of the thinning of markers in the first example can countered by  increasing the number of individuals from $M = 1,000$ to $M''=1,121$, and the effects of the 10-fold thinning in the second example can be reversed by increasing the number of individuals to $M'' = 1,226 $.
 Thus, if $\la_K$ is far enough from zero, and $M,N $ are large enough so that $L$ is much larger than 0.5,  then a moderate thinning of the number of markers will have no effect  on accuracy. On the other hand, formula \eqref{eta} illustrates that  even slight increase in $M$ can compensate for a fewer number of markers, in this commonly encountered scenario where $N \gg M$.
}

A possible violation of the assumption of stochastic independence of individuals is non-random mating, which results in departures from Hardy-Weinberg equilibrium (HWE).
We find that departures from HWE do not significantly  reduce the power of PCA for detecting population substructure. In fact, to compensate for $F=1$ instead of $F=0$ it is enough to increase the number of individuals $M$ in the study by a factor of 2;  however, no such corrections are needed if the smallest eigenvalue $\la_K$ is separated from zero well enough so that $L>0.5$. %

Lastly, individuals that are closely related violate our assumption of random sampling of individuals from a subpopulation. These hidden, or ``cryptic'', relationships among individuals may affect the applicability of our method for population structure by changing the   distribution of eigenvalues. \textcolor{black}{Previous studies have shown hidden relatedness in the International HapMap Project (HapMap) data  \citep{Pemberton2010,Stevens2012}. PCA of the HapMap genotype data illustrates how the distribution of the small eigenvalues is disturbed by cryptic relationships}. Similar comments have been made by other authors; in particular \cite[page 2089]{Patterson:2006} warn about ``the inclusion of samples that are closely related".
 Under a simple substructure scenario, the \textcolor{black}{histogram distribution of the small} eigenvalues should have a unimodal elliptical shape similar to the Marchenko-Pastur distribution, easily distinguished from large eigenvalues corresponding to substructure. However, as we demonstrate in Figure \ref{Fig-MP-sim}, individuals may exhibit cryptic relatedness, or other unknown non-random relationships, which result in changes to the distribution of the bulk, making it difficult to infer the correct cutoff for substructure.  We find that pruning for LD does not seem to improve the fit of the bulk to Marchenko-Pastur distribution. Instead, exclusion of related individuals improves fit of the bulk; hence, we suggest that it is necessary to remove related individuals from the sample to improve the resolution of true substructure.

\subsection{Summary}Overall, our mathematical analysis confirms empirical evidence that PCA is a robust technique for learning about population substructure of a dataset. Contrary to current practice, based  on the mathematical theory presented in the following sections we recommend using PCA directly on the data matrix $\mC$ without centering or renormalization. \textcolor{black}{Since we do not center, we obtain $K$ large eigenvalues in the presence of $K$ subpopulations, instead of $K -1$ large eigenvalues when centering. This difference is proven in section 4.3. In avoiding renormalization of the data, we are able to produce mathematical theory showing that with} sufficient sample size, there should be strong separation between the large eigenvalues corresponding to population structure and the remaining bulk of the distribution.  We illustrate a proof of principle of our approach through simulations and application to human genotype data from world-wide populations.

\section{Results and Discussion}
In this section we illustrate the power of our mathematical findings for inference of population structure in genetic data. We begin with the simulations for a ``simple model" where all the hidden parameters can be computed. This allows us to analyze sensitivity of the technique to the precise value of $L$ in \eqref{eta}. In particular, since we are able to compute the hidden parameter $L$, we can then see how well our predictions match theory, and how well powered we are to detect the known substructure.  Then we consider an intermediate  stage -- we use simulated data from  \citep{gao2011identifying} where the true demography is known and each individual is a member of one of the subpopulations, but for which the hidden population parameters are not known. The datasets cover several different demographic models, with different subpopulation split times, trees, and migration rates. For more details on each of the models, see reference~\citep{gao2011identifying}. Finally, we apply the theory to human genotype data from world-wide populations, where we discuss additional challenges due to linkage disequilibrium and cryptic relatedness, and where the value of  $L$ is not available.

\subsection{Simulations for a simple model} \label{Sec:IP}
We show that the theoretical approximations to the largest eigenvalues work very well when all the assumptions of mathematical analysis are satisfied. We generate simulations based on a simple model in which we make several assumptions that are unrealistic, but allow us to compute important mathematical parameters to explore the performance of our method. We assume that the  site frequency spectra are known for each subpopulation. We also know how many individuals came from each subpopulation, and that the subpopulations are independent.
The latter corresponds to a scenario where all subpopulations  diverged and stopped interacting in the distant past.
 Though unrealistic, such a simplistic model has the advantage that all relevant quantities that enter mathematical analysis can be computed. In particular, this model allows us to study the effects of choosing a small enough number of individuals $M$ and analyze how the failure rate for the estimator depends on the value of  $L$, see Table \ref{T-eta-K}.

In our simulations we use unequal subpopulation samples sizes, drawn with proportions $c_1=1/6$, $c_2=1/3$, $c_3=1/2$. The theoretical population proportion $p_r(j)$ at  each SNP location for each subpopulation was selected from the
 same probability density function  $\psi(x)=0.5/\sqrt{x}$ (the so-called site frequency spectrum, see Section \ref{Sec:Th} and \cite{Kimura:1964}). %
 We selected $p_1(j), p_2(j),p_3(j)$ independently at each location $j$ %
 which corresponds to  product %
 joint probability density  $\psi(x,y,z)=\psi(x)\psi(y)\psi(z)$ for our $K=3$ simulated subpopulations.
 We then simulated independent individual genotypes for the   $j$-th marker   of a member of the $r$-th subpopulation by choosing independent binomial values  (with $2$ trials)  with probability of success $p_r(j)$.
    We can evaluate how well the mathematical description matches the simulated data  because we can explicitly compute the theoretical matrix of moments (\ref{Qrr}-\ref{Qrs})  and hidden matrix  $\mQ$  defined by \eqref{QQQ}: %
$$
[m_{r,s}]=\left[
\begin{array}{ccc}
 {1}/{5} & {1}/{9} & {1}/{9} \\
 {1}/{9} & {1}/{5} & {1}/{9} \\
 {1}/{9} & {1}/{9} & {1}/{5} \\
\end{array}\right],\;
\mQ=\left[\sqrt{c_rc_s}\,m_{r,s}\right]=
\left[
\begin{array}{ccc}
    0.0333&    0.0262&    0.0321 \\
    0.0262  &  0.0667&    0.0454 \\
    0.0321&    0.0454 &   0.1000 \\
\end{array}\right]
$$

The eigenvalues of the above matrix    $\mQ$ are  $[\la_1,\la_2,\la_3]= [ 0.1467, 0.0355, 0.0178]$.
The theoretical prediction for the observed largest  eigenvalues of the normalized  sample covariance matrix  \eqref{X} are then   given by \eqref{La-predicted}. %
    The actual observed eigenvalues will not match exactly these predictions; the purpose of the simulations is to illustrate how far the empirical values for finite $M,N$ differ from the values predicted by theory in the limit as $M,N$ tend to infinity.
For example,  formula \eqref{La-predicted}  with
$M=120, N=2500$
gives  the following values:
$(  47.4, 11.5, 5.7 )$.  In a simulation, we obtained  the following eigenvalues for the normalized matrix \eqref{X}:
$$({\bf\La_1,\La_2,\La_3},\La_4,\La_5,\dots ) =   ({\bf 48.2,   11.5,    5.8},    0.27,    0.26,\dots)$$
We see that the threshold of $0.5$ separates  clearly the $K=3$  largest  eigenvalues, set in boldface, from the bulk.

We remark that there are two sources for the discrepancy between theoretical and observed eigenvalues:
the approximation $\mB_N\approx \mQ$ that appears in Lemma \ref{L:asymp} and then the approximation due to randomness within each subpopulation that  is still present  in \eqref{Large_La} for finite numbers of SNPs, $N$.
It is therefore encouraging to see that the predicted values for the eigenvalues match well with the empirical eigenvalues in the simulations  for realistic values of $M=$ 120, $N= 2500$ as well was for much smaller values of $M$,  see Table \ref{T-eta-K}, %
or even for $M=12$. %
When $M=24$ and  $N=100$, we get $L=0.77$ %
and we are successful in determining correct value of $K=3$ the vast majority of the time: in 100,000 simulations,  $K$ is underestimated in only a minuscule  0.005 \% of the runs and never overestimated.    Reducing further the number of individuals  to $M=12$, leads to $L=0.471$ and in this case,  as expected, the rate at which our  estimate of $K$ fails increases. But the decrease in accuracy is not dramatic, and  an underestimate of $\hat K=2$ occurs only in about  6.5\% of runs. For reasonably large $M$ and $N$, our power is quite high, and
an overestimate of $K$ did not occur in any of the replicates under any scenario.

\begin{table}[H]\caption{False negative probability as a function of $L$ (based on 100,000 simulations) \label{T-eta-K}}
\begin{tabular}{cccll}
$M$ & $N$   &$L$ &$\Pr(\hat K<3)$ \\
 \textit{individuals} & \textit{SNPs} & & \textit{False negative rate}\\ \hline
120 & 2500  &5.747 & 0 \\
48& 100& 1.926  &0 \\
24& 100& 0.770   &    0.00005 \\
12&100& 0.471  &0.065 \\
12  & 50& 0.385   &0.53 \\
6&  100 &  0.276   &0.87
\end{tabular}
\end{table}

\subsection{Simulated genetic data under various demographic scenarios}
Next we applied our method to simulated substructured datasets generated by coalescent simulations under various demographic scenarios from   \citep{gao2011identifying}. This dataset has $N=100$ markers sampled from subpopulations with constant subpopulation sizes of 50 individuals, and with  varying $M=50\times K$, $K=1,\dots,5$.

In these simulations, $\mQ$ is not known.  But subpopulation labels are known, so we can estimate   $L$  by using the smallest eigenvalue of the empirical approximation to $\mQ$ based on formula  \eqref{Qhat} with allelic probabilities estimated from \eqref{phat}.
The observed accuracy under each scenario, shown in Tables~\ref{Hong2}--\ref{Hong4}, depends on the estimate $\hat L$ of $L$  consistently with our results from simulations listed in Table~\ref{T-eta-K}.

Model-based approaches such as those evaluated in \citep{gao2011identifying} are likely to outperform PCA detection for such small sample sizes, since the true substructure corresponds to that found in the underlying \textit{STRUCTURE}-like model \citep{Falush2003}.
However, using our method, we have no overestimates of the number of subpopulations $K$ %
in any of these sets of simulations.
From these simulations we find that the error rates are not affected by using approximate $\mQ$ in the calculations instead of exact $\mQ$ when we approximate $L$ from the subpopulation data.
(The values $ \hat L$ of the approximated $L$ varied considerably in the simulated 50 runs for a model, but $\hat L >0.5$ was  associated with the correct value of $\hat K$ in each case.)

  \begin{table}[H]
 \begin{tabular}{r||cccc}
 True $K$  & 2 & 3& 4&5 \\
 $\Pr(\hat K<K)$
 & 0.0
 & 0.14
 & 0.80
 &  0.98
  \end{tabular}
 \caption{False negative (error) rates of estimates of $K$ for 50 simulated data sets under model \textit{Split} \label{Hong2} }%
 \end{table}

 \begin{table}[H]
 \begin{tabular}{r||cccc}
True $K$  & 2 & 3& 4&5
  \\
 $\Pr(\hat K<K)$
 & 0.0
 & 0.16
 & 0.72
 &  1.0
  \end{tabular}
 \caption{False negative (error) rates of estimates of $K$ for 50 simulated data sets under model \textit{Inbred} \label{Hong3} } %

 \end{table}

    \begin{table}[H]
 \begin{tabular}{r||cccc}
True $K$  & 2 & 3& 4&5
   \\
 $\Pr(\hat K<K)$
 &  0.0
 &  0.02
 &  0.10
 &  0.46
  \end{tabular}
 \caption{False negative (error) rates of estimates of $K$ for 50 simulated data sets under model \textit{Mig} \label{Hong4}} %

  \end{table}

\subsection{Application to human population genotype data}

 We next examined the distribution of eigenvalues for a dataset of human genotype data.
The International HapMap Project was designed to create a catalog of human genetic variation to find genes that affect health, disease, and individual responses to medications and environmental factors. We use a genome-wide SNP dataset made publicly available through this project as HapMap 3
(HapMap 3, release 3, human genome build 36)
which contains genotypes of individuals from 11 human populations, comprised of genotype data collected using two
   platforms: the Illumina Human1M and the Affymetrix SNP 6.0 arrays. These populations and datasets have been extensively studied previously (see \small{\texttt{http://hapmap.ncbi.nlm.nih.gov/publications.html.en}} \normalsize for a list of publications).

Unlike simulated data, the true substructure of the complete set of populations is unknown.  We therefore report the performance of our theoretical analysis on the subset of well defined subpopulations, which are believed to have clear substructure: the Yoruba, of Ibadan, Nigeria (YRI), European Americans from Utah (CEU), and Han Chinese from Beijing, China  (CHB).

After extracting the CEU, CHB, and YRI individuals,  we processed the data through PLINK \citep{purcell2007plink}  with   filters
{\tt --filter-founders --geno 0} to remove SNPs with any missing data and exclude offspring of trios, and further exclude non-autosomal markers. The final dataset of  $M=297$ individuals and  $N=736750$ markers was used for the analysis of the eigenvalues.

As expected from mathematical theory and from the choice of very distinct subpopulations, the eigenvalues of matrix $\mX$ split  into \textcolor{black}{two sets: the non-significant, or small, eigenvalues} in Figure~\ref{Fig3pop} that lie below the cutoff of 0.5, and three large eigenvalues $\La_1=102.0$,  $\La_2=14.55$, and $\La_3 = 7.37$ that exceed the cutoff of $0.5$ and give an estimate of $\hat{K}=3$, which matches our prediction for these three populations. The histogram seems to show some possible eigenvalues separated from the bulk, but these may correspond to various minor deviations from the model that are present in real data - we discuss this issue below.

  \begin{figure}[H]
  \begin{center}

    \includegraphics[width=10cm]{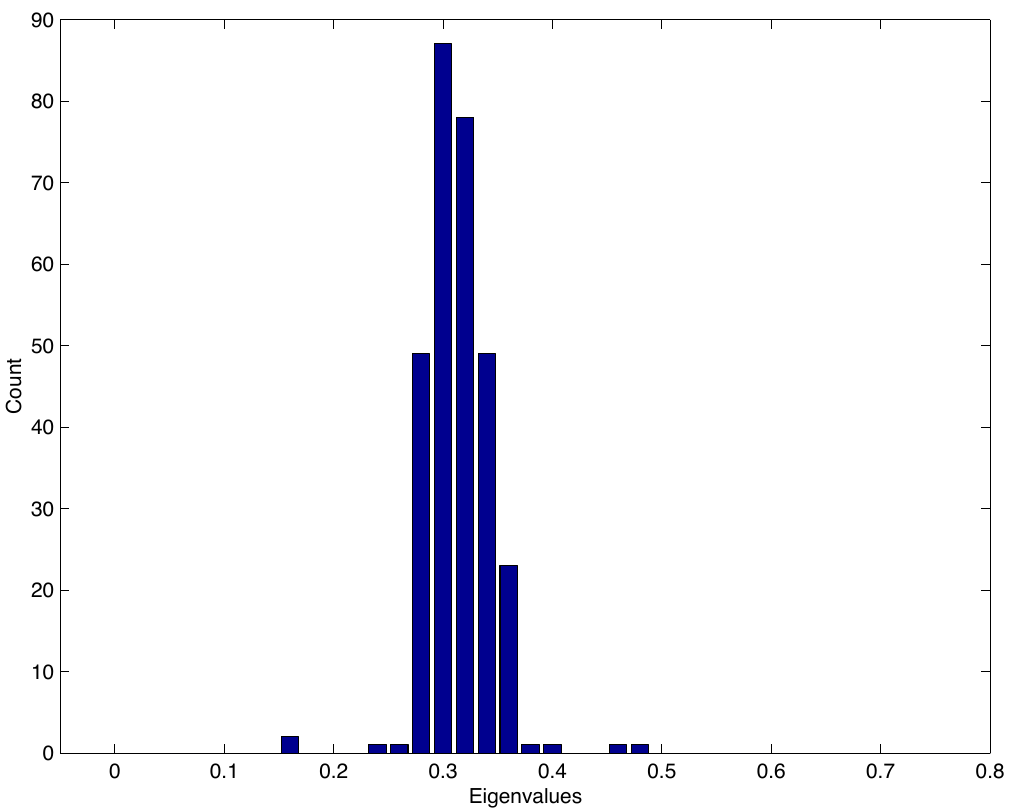}
   \caption{\textcolor{black}{Histogram of the eigenvalues from PCA of Hapmap CEU, CHB, and YRI unrelated individuals (parents of trios), excluding the large eigenvalues ($ \Lambda \gg 1$), which are omitted to better illustrate the shape of the non-significant eigenvalues. Here, the largest three eigenvalues that correspond to subpopulation structure are $\La_1= 102.0$, $\La_2= 14.55 $, $\La_3= 7.37$.}
\label{Fig3pop}   }
   \end{center}

  \end{figure}

\subsection{Practical comments on using theory} %
Theorem \ref{T1} and the cutoff of $0.5$ should be used in real data only after visual control for the shape of the \textcolor{black}{distribution of the eigenvalues} and for the separation \textcolor{black}{ of the non-significant eigenvalues, or the ``bulk'', from the largest eigenvalues}. Simulations indicate that when the theory is applicable the histogram of the bulk is located to the left of 0.5 (or 1 when $F=1$) and  its shape resembles the Marchenko-Pastur law %
of the same ratio $N/M$.  For a typical large value of $N/M>50$, this shape  looks similar to a semi-ellipse.

The shape of the \textcolor{black}{histogram of the distribution of eigenvalues} is  affected by relationships between the individuals. This is best illustrated when the offspring of trios are included in the  analysis of  the three populations HapMap dataset. Then the shape of the \textcolor{black}{distribution} does not follow Marchenko-Pastur law, and instead resembles a shape reproduced by repeating a large number of individuals, see Figure~\ref{Fig-MP-sim}.
  \begin{figure}[H]
  \begin{center}
\begin{tabular}{cc}
  Simulation of dependent  individuals&  Hapmap trios with offspring\\
   \includegraphics[width=6cm]{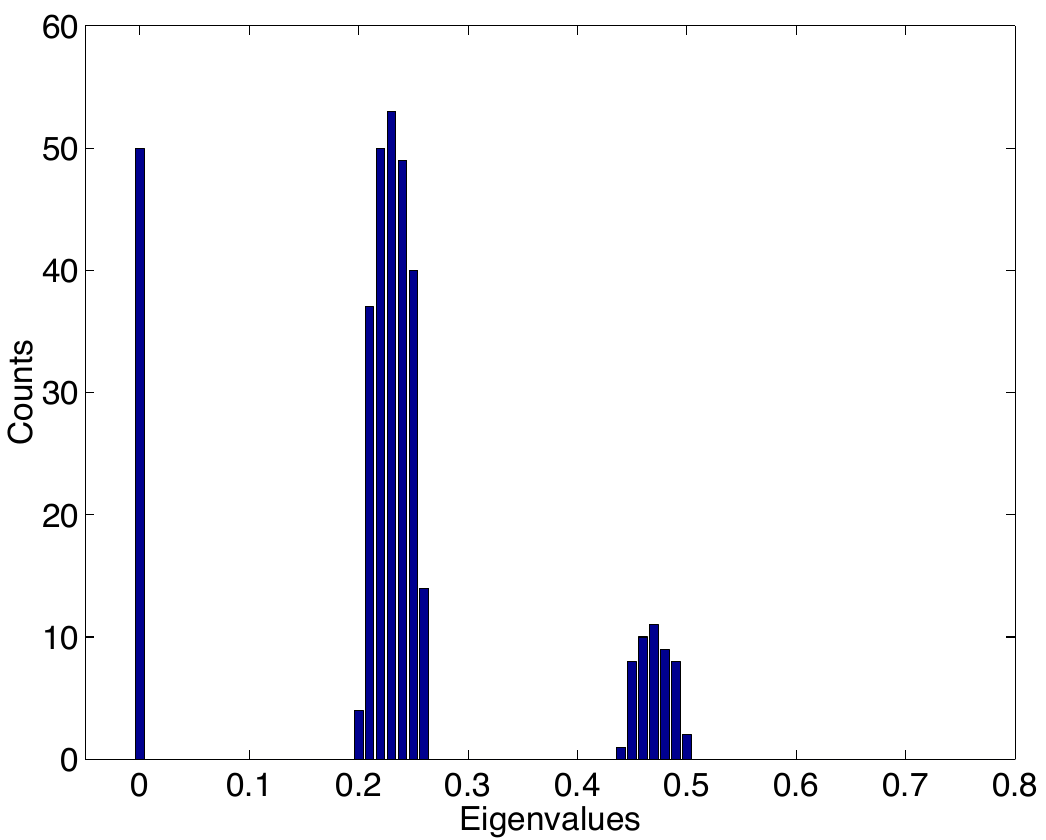}
    &   \includegraphics[width=6cm]{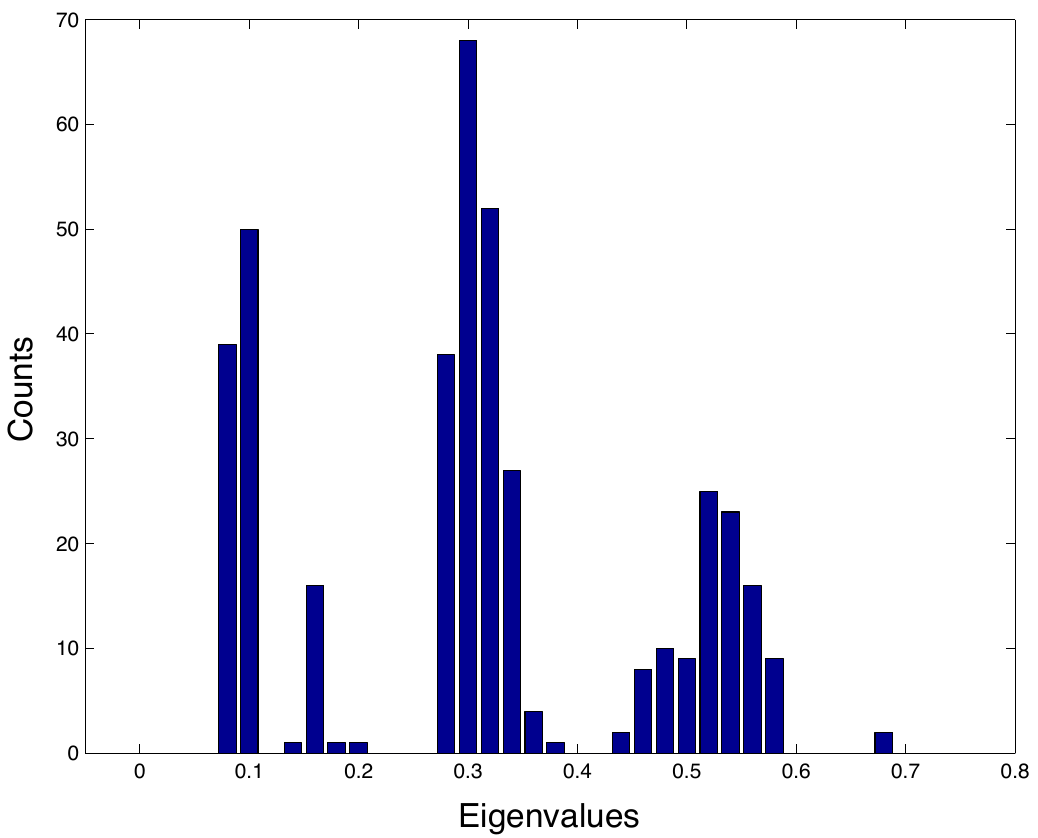}   \\
$M=350$, $N=70,000$.
& $M=405$, $N=660,847$. %
 \end{tabular}
 \vspace{6pt}
\caption{\textcolor{black}{Including related individuals perturbs the expected distribution of eigenvalues resulting from PCA.  \textit{Left:} A counts histogram of the eigenvalues from PCA using data generated via binomial simulation, where 29\% of the individuals have been repeated. Large eigenvalues (corresponding to population structure) are not shown so as to better illustrate the effect on the distribution of the non-significant eigenvalues. \textit{Right:}  A histogram of the eigenvalues for PCA of three populations of HapMap (CEU, YRI, and CHB) including trios -- 297 parents and their related 108 offspring. Large eigenvalues are not shown.   Both simulated data and empirical genotype data show that inclusion of related individuals results in a multi-modal distribution of the eigenvalues, arising from the non-random correlations of individuals. } }  \label{Fig-MP-sim}
\end{center}
\end{figure}

The shape of the \textcolor{black}{histogram of eigenvalues} for the full  HapMap data set  seems to exhibit additional  deviation from the expected shape, extends well beyond  0.5, and the distribution of the \textcolor{black}{small eigenvalues} might  not  be unimodal.   These deviations cannot be attributed to linkage disequilibrium as they do not disappear after LD pruning. Instead, we expect these deviations from the expected shape correspond to complex substructure and relationships among individuals\textcolor{black}{, as has been suggested in previous studies of cryptic relationships among HapMap samples \citep{Pemberton2010,Stevens2012} }.

  \begin{figure}[H]
\begin{center}
   \includegraphics[width=10cm]{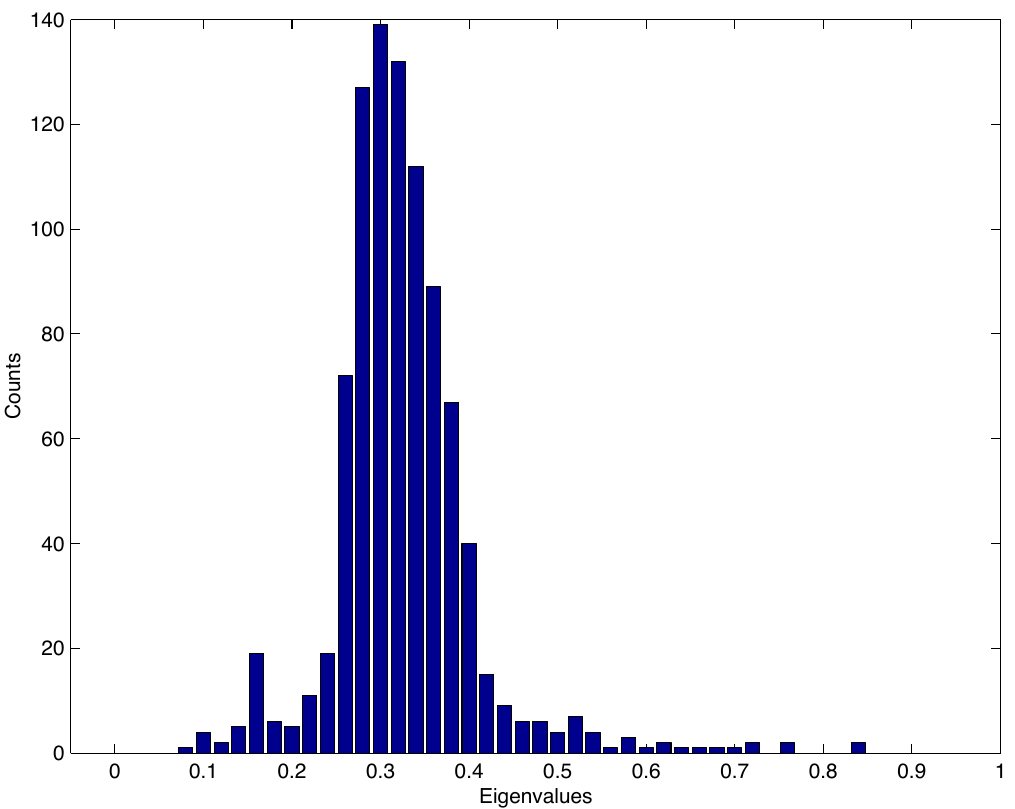}

\caption{
\textcolor{black}{Histogram of the eigenvalues from PCA of all 11 populations in HapMap unrelated individuals.
 Nonautosomal markers with $M=924$, $N=422253$.
The six largest eigenvalues  $\La_1=335.9$, $\La_2=37.4$, $\La_3=16.7$,  $\La_4=2.5$,  $\La_5=2.1$, $\La_6=1.7$ are  not  shown. }
   \label{Fig-hapmap11pop} }
\end{center}
\end{figure}

\section{Theory} \label{Sec:Th}
Our goal in this section is to point out  aspects of population structure that could be responsible for  the observed phenomenon that the set of  eigenvalues of $\mC\mC'$  splits into two groups: a small set of $K$ large eigenvalues, and  a large set of $M-K$ of  small eigenvalues. Since $\mC\mC'$ is a random matrix, we want this split to occur with overwhelming probability. This task requires a more detailed specification of the model. While the statements become more cumbersome, the gain is a clear  indication of how different  aspects of the model influence our ability to discover the subpopulation structure, and under what circumstances it may remain hidden.

As explained in Section \ref{Sec:patterson}, we assume that our genetic markers are biallelic and that {we have} $N$ polymorphic markers.  We {assume that we have $M$  diploid individuals} from $K$ subpopulations, and that
 the genotype probabilities for marker $j$ of  individual $i$  from subpopulation $r$
are  described by formulas
\eqref{Hardy-Weinberg}-\eqref{Hardy-Weinberg3}.
  The allelic probabilities $p_r(1),\dots,p_r(N)$ for the $r$-th subpopulation are unknown but represent the true underlying frequency of the alleles in the $r$-th subpopulation, and they are fixed when sampling the individuals from the subpopulation.

In diffusion models for a single population, allelic probabilities $p_r(1),\dots,p_r(N)$ are considered random and are then adequately described  by their density function $\psi_r(x)$, $x\in[0,1]$, see e.g.  \citep{Kimura:1964}. We shall call $\psi_r(x)$  the (univariate) site frequency spectrum for the $r$-th population.
A site frequency spectrum is informative of the demographic history of a set of samples, and joint site frequency spectra have been used in multi-population demographic analysis \citep{Gutenkunst2009,xie2011site,bustamante2001directional}. The site frequency spectrum is a theoretical distribution capturing all the genetic variation present in a set of individuals. However,
 obtaining a site frequency spectrum requires high quality sequence data, or computational correction of the method of SNP discovery resulting in SNP ascertainment bias \citep{Keinan2007, Clark2005}.   For our analysis we do not need the site frequency spectra; instead, we only require some set of informative markers to model each pair of subpopulations $r,s$ in terms of its pairwise probability density function $\varphi_{r,s}(x,y)$. Our requirement is such that the limit \eqref{kkk} exists, so our model allows for any distribution of markers, and we can perform analysis without correction on SNP genotype data that can be of unknown or complex ascertainment.

Consequently, we assume that  allelic probabilities $p_r(1),\dots,p_r(N)$ are random and are  adequately described  by their density function $\varphi_r(x)$,   and that for the $j$-th locus each pair of allelic probabilities $(p_r(j) , p_s(j))$  follow the same bivariate distribution with density $\varphi_{r,s}(x,y)$. This is a  natural setting where the limit  \eqref{kkk} exists, and if we know the ascertainment-biased  pairwise site frequency spectra, the limit is given by the pairwise subpopulation moments:

 \begin{equation}\label{Qrr}
 m_{r,r}=\int_0^1 x^2\varphi_r(x)dx,
 \end{equation}
and  for $r\ne s$,
\begin{equation}\label{Qrs}
m_{r,s}=\int_0^1\int_0^1xy\,\varphi_{r,s}(x,y)dxdy.
\end{equation}

Several authors   \citep{balding1995method,Patterson:2006,Pritchard2000} avoid complications of ascertainment biased site frequency spectrum by  using a different model. They  they assume that for a single marker the allelic probabilities $p_1,\dots,p_K$  evolved from a single ancestral population with diffuse marker distribution $\psi ( {p}) dp$ and conditionally on $p$ have mean $p$ and covariance $cov(p_r,p_s|p)=p(1-p)B_{rs}$, where $B=[B_{r s}]$ is a positive definite $K\times K$ matrix.
 When $N$  independent  markers are drawn from this model, by the law of large numbers
$$
\lim_{N\to\infty}\frac{1}{N} \sum_{j=1}^N p_r(j)\to \int_0^1 p \psi({p}) dp
$$
and
\begin{multline*}
\lim_{N\to\infty}\frac{1}{N}\sum_{j=1}^N p_r(j)p_s(j)\to \int_0^1 E(p_r p_s|p) \psi({p})dp\\ =
 \int_0^1 (cov(p_r p_s|p)+p^2)  \psi({p})dp  = B_{rs}\int_0^1 p \psi({p})dp + (1-B_{rs}) \int_0^1 p^2 \psi({p})dp
\end{multline*}
Thus our assumption \eqref{kkk} holds with  $m_{rs} = B_{rs}\mu_1 + (1-B_{rs})\mu_2$ replacing \eqref{Qrs}, where $\mu_1=\int_0^1 p \psi({p})dp$ and $\mu_2= \int_0^1 p^2 \psi({p})dp$.  This shows that our results on separation of the largest eigenvalues are applicable  to this model as well.

Next we turn to assumptions on data matrix $\mC$. %
Since  the singular values of $\mC$ do not depend on the order of the rows, for mathematical analysis we assume that the individuals were arranged by subpopulation, so that $\mC$ has block structure  \eqref{Block}.
  \begin{equation}\label{Block}
  \mC=\left[\begin{matrix}
  \mC_1\\\mC_2\\\vdots\\\mC_K
\end{matrix} \right]
\end{equation}
where $\mC_r$ is the $M_r(N)\times N$ sub-matrix representing the data for the individuals from the $r$-th subpopulation.
For  $r=1,\dots,K$, $i=1,\dots,M_r$, $j=1,\dots,N$, we assume that the distribution of entry $[\mC_r]_{i,j}$ is given by  \eqref{Hardy-Weinberg}-\eqref{Hardy-Weinberg3}.  We assume that all the entries of matrix $\mC=\mC_N$ are independent, conditionally on $\{p_{r}(j)\}$.

For the almost sure results we also need to assume that $\mC$ comes from an infinite matrix.  More specifically, we assume that conditionally on $\{p_{r}(j)\}$, each of the $K$ blocks $\mC_r$ arises as an upper-left $M_r\times N $ corner of an infinite matrix with independent entries that have distribution \eqref{Hardy-Weinberg}--\eqref{Hardy-Weinberg3}.  %
We also need to make a technical assumption that the number of individuals from the $r$-th subpopulation  increases with $N$, that is $M_r(N+1)\geq M_r(N)$.

Since the  eigenvalues of  $\mC\mC'$  are large, it is more convenient to  consider the normalized $M\times M$ sample covariance matrices \eqref{X}.
Of course, we assume $N>M>K$.

\begin{theorem}\label{T1}
Let $\la_1\geq\la_2\geq\la_K$ be the  (deterministic) eigenvalues of matrix $\mQ$  defined by \eqref{QQQ} and let $\La_1(N)\geq\La_2(N)\geq\dots\geq\La_{M(N)}(N)\geq 0$ be the (random) eigenvalues of the $M(N)\times M(N)$ sample covariance matrix $\mX_N$ from \eqref{X}.
Then, as $N\to\infty$, with probability one
\begin{equation}\label{MP}
\La_{K+1}(N)\leq  (1+F)/2
\end{equation}
and
 \begin{equation}\label{Large_La}
\left(\frac{1}{\sqrt{M(N)}}+\frac{1}{\sqrt{N}}\right)^2\left[\La_1(N),\La_2(N),\dots,\La_{K}(N)\right]\to
4\left[\la_1,\la_2,\dots,\la_{K}\right]\end{equation}

\end{theorem}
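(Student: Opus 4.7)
The plan is to decompose $\mC$ into a low-rank deterministic ``signal'' plus a mean-zero noise matrix, analyze each piece separately, and recombine via Weyl's inequality for singular values. Write $\mC = \mC^0 + \mZ$, where $\mC^0$ has entry $2p_r(j)$ in row $i$, column $j$ whenever individual $i$ belongs to subpopulation $r$, and $\mZ = \mC - \mC^0$ is the centered remainder. From \eqref{Hardy-Weinberg}--\eqref{Hardy-Weinberg3}, the entries of $\mZ$ are independent, bounded in absolute value by $2$, of mean zero, and have variances $2(1+F_{r,j})p_r(j)(1-p_r(j)) \le (1+F)/2$.

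For the signal I would factor $\mC^0 = 2\mD\mP$, with $\mD$ the $M \times K$ indicator matrix of subpopulation membership and $\mP$ the $K \times N$ matrix with rows $(p_r(j))_{j=1}^N$. The nonzero eigenvalues of $\mC^0(\mC^0)'$ then coincide with those of $4\mP\mP'\mD'\mD$, or equivalently of the symmetric $K \times K$ matrix $4(\mD'\mD)^{1/2}\mP\mP'(\mD'\mD)^{1/2}$. Since $\mD'\mD = \mathrm{diag}(M_1,\dots,M_K)$ with $M_r/M \to c_r$, and Assumption~\ref{A1} gives $(\mP\mP')_{r,s}/N \to m_{r,s}$, it follows that
\[
\frac{1}{MN}(\mD'\mD)^{1/2}\mP\mP'(\mD'\mD)^{1/2} \longrightarrow \bigl[\sqrt{c_rc_s}\,m_{r,s}\bigr]_{r,s=1}^K = \mQ.
\]
By continuity of eigenvalues, the nonzero eigenvalues of $\mC^0(\mC^0)'/(MN)$ converge to $4\la_1,\dots,4\la_K$, while $\mathrm{rank}(\mC^0) \le K$ forces $\sigma_{K+1}(\mC^0) = 0$.

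The principal obstacle is the almost-sure spectral-norm bound
\[
\limsup_{N\to\infty} \frac{\|\mZ\|^2}{(\sqrt{M}+\sqrt{N})^2} \le \frac{1+F}{2},
\]
where $\|\cdot\|$ denotes the operator norm. The classical Bai--Yin theorem assumes i.i.d.\ entries, whereas $\mZ$ has independent but heterogeneously distributed entries with variances only bounded by $(1+F)/2$. I would invoke the extension of Bai--Yin in the Bai--Silverstein framework to independent, uniformly bounded entries with variances bounded by $\sigma^2 = (1+F)/2$; the infinite-matrix hypothesis in the theorem statement and the boundedness of the entries supply the required moment and embedding conditions, and a standard truncation-and-domination argument reduces the heterogeneous case to the homogeneous one with common variance $(1+F)/2$.

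Finally, Weyl's inequality gives $|\sigma_j(\mC) - \sigma_j(\mC^0)| \le \|\mZ\|$ for every $j$. Since $\sigma_{K+1}(\mC^0) = 0$, dividing $\sigma_{K+1}(\mC)^2 \le \|\mZ\|^2$ by $(\sqrt{M}+\sqrt{N})^2$ proves \eqref{MP}. For $j \le K$, the signal singular values satisfy $\sigma_j(\mC^0)^2 \sim 4\la_jMN$, of order $MN$, whereas $\|\mZ\|^2 = O(M+N)$ is of strictly smaller order under $M/N \to c > 0$; hence $\sigma_j(\mC)^2/(MN) \to 4\la_j$. Using the identity $(1/\sqrt{M}+1/\sqrt{N})^2 = (\sqrt{M}+\sqrt{N})^2/(MN)$ one obtains
\[
\Bigl(\tfrac{1}{\sqrt{M}}+\tfrac{1}{\sqrt{N}}\Bigr)^2 \La_j(N) = \frac{\La_j(\mC\mC')}{MN} \longrightarrow 4\la_j,
\]
which is exactly \eqref{Large_La}.
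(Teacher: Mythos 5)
Your proposal is correct and follows essentially the same route as the paper: the same decomposition of $\mC$ into the rank-$K$ signal $2\sum_k E_k P_k'$ plus a centered, bounded, mean-zero noise matrix, the same reduction of the signal's singular values to a $K\times K$ Gram matrix converging to $\mQ$ (Lemma~\ref{L:asymp}), an extended Bai--Yin operator-norm bound for the noise (Lemma~\ref{L:norm}), and Weyl's singular-value perturbation inequality to combine the two. The one step you leave vague---reducing the heterogeneous entry variances, which are only bounded by $(1+F)/2$, to the homogeneous case---is carried out in the paper by adding an independent two-valued compensating matrix that equalizes all variances, applying the Couillet--Debbah--Silverstein extension of Bai--Yin, and then removing the compensator using convexity of the operator norm together with Jensen's inequality and Fatou's lemma.
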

When $\mQ$ has full rank $K$ formula \eqref{Large_La} indicates that for large $M,N$  the first $K$ largest empirical eigenvalues of $\mX_N$ are large and can be   estimated from the eigenvalues of $\mQ$. Formula
\eqref{MP} shows that the remaining eigenvalues are relatively small, and are of the order $1/M$ smaller than the largest $K$ eigenvalues.
\begin{remark}\label{linkage}
Under Hardy-Weinberg equilibrium $F=0$, so \eqref{MP} takes form
\begin{equation}\label{MP0}
\La_{K+1}(N)\leq  1/2.
\end{equation}
However, usually the value of $F$ is not known.
In such cases,  since $0\leq F\leq 1$,  while $\La_{K}(N)\to \infty$ as $N\to\infty$ is much larger than 1,  formula \eqref{MP} may be replaced by
\begin{equation}\label{MP1}
\La_{K+1}(N)\leq  1.
\end{equation}
\end{remark}
\begin{remark}
We have much less information for the case when  $\mQ$ has rank $K'<K$ with positive  eigenvalues $\la_1\geq\la_2\geq\la_{K'}>0$ but $\la_{K'+1}=\dots=\la_K=0$.  In this case the bulk is still concentrated below 1/2, as   \eqref{MP} shows that $\La_{K+1}\leq 1/2$. From
\eqref{Large_La} we deduce that  the eigenvalues $\La_1,\dots,\La_{K'}$ diverge to infinity. But
we do not have any information about $\La_{K'+1},\dots,\La_{K}$ which in this case are  only  known to be of order smaller than
$\frac{MN}{(\sqrt{M}+\sqrt{N})^2}$; we do not have any mathematical results  about  their relation  to the ``cutoff" $(1+F)/2$.

\end{remark}

\subsection{Proof of Theorem \ref{T1}} %
From   \eqref{Hardy-Weinberg}-\eqref{Hardy-Weinberg3} it follows that the expected value of the entry  $C_{i,j}$ for an individual from the $r$-th subpopulation is $2 p_r(j)$ and the variance   is $2(1+F_{r,j})p_{r}(j)(1-p_{r}(j))\leq  (1+F)/2$.

Let $P_r\in\RR^N$ denote the column vector  $[p_r(1),\dots,p_r(N)]'$.
Let $E_r\in\RR^M$ be the column vector  of ones at the rows corresponding to the $r$-th block of $\mC$, i.e., with $[E_r]_i=1$ if $M_1+\dots+M_{r-1}< i\leq  M_1+\dots+M_{r}$ and $0$ otherwise. Then $\E(\mC)=2\sum_{r=1}^K E_r P_r'$, and
we write
\begin{equation}\label{Eq:perturb}
\mC_N=\mV+2 \sum_{r=1}^K E_r P_r'\;,
\end{equation}
where  $\mV$ is an $M\times N$ matrix of centered independent uniformly bounded random variables. Note that  $\E(\mC)$ factors as in  \cite[Eqn. (1)]{Engelhardt2010}, but we keep an additional term in analyzing    \eqref{Eq:perturb}.

Let $\la_1\geq\dots\geq \la_K\geq 0$ be all eigenvalues of $\mQ$. (Recall that  $N> M=M(N)>K$.)
\begin{lemma}\label{L:asymp}
Let $\sigma_1(N)\geq\sigma_2(N)\geq\dots\geq\sigma_K(N)\geq 0$ be the singular values of $\sum_{r=1}^K E_r P_r'$.
Then for $1\leq r\leq K$,
$\lim_{N\to\infty}\frac{\sigma_r^2(N)}{NM(N)}=\la_r$.
\end{lemma}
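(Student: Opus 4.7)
The plan is to reduce the $M\times N$ low-rank matrix $A:=\sum_{k=1}^K E_k P_k'$ to a $K\times K$ symmetric matrix whose entries converge to those of $\mQ$, and then invoke continuity of eigenvalues in the fixed finite dimension $K$.

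First I would observe that $A$ factors as $A = EP$, where $E=[E_1\;\cdots\;E_K]$ is the $M\times K$ indicator matrix of the subpopulation blocks and $P$ is the $K\times N$ matrix whose $k$-th row is $P_k'$. Because the supports of $E_1,\dots,E_K$ are disjoint, $E'E = D_M:=\mathrm{diag}(M_1,\dots,M_K)$. Since the nonzero eigenvalues of $AA' = EPP'E'$ coincide with the nonzero eigenvalues of $(E'E)(PP')= D_M(PP')$, and that matrix is similar (via $D_M^{1/2}$) to the symmetric $K\times K$ matrix
\[
B_N \;:=\; D_M^{1/2}\,(PP')\,D_M^{1/2},
\]
the $K$ squared singular values $\sigma_1^2(N)\geq\cdots\geq\sigma_K^2(N)$ of $A$ are exactly the $K$ eigenvalues of $B_N$ (padded by zeros if $A$ is rank-deficient).

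Next I would compute the entries explicitly: $[B_N]_{r,s}=\sqrt{M_r M_s}\,\bigl(P_r'P_s\bigr)=\sqrt{M_r M_s}\,\sum_{j=1}^N p_r(j)p_s(j)$. Rescaling by $NM(N)$ gives
\[
\left[\frac{B_N}{NM(N)}\right]_{r,s}
\;=\; \sqrt{\tfrac{M_r(N)}{M(N)}\tfrac{M_s(N)}{M(N)}}\,\cdot\,\frac{1}{N}\sum_{j=1}^N p_r(j)p_s(j).
\]
By hypothesis $M_r(N)/M(N)\to c_r$, and by Assumption \ref{A1} the normalized inner product $\frac{1}{N}\sum_j p_r(j)p_s(j)$ tends to $m_{r,s}$. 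Hence entrywise $B_N/(NM(N))\to [\sqrt{c_r c_s}\,m_{r,s}]_{r,s} = \mQ$.

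Finally, since $K$ is fixed and independent of $N$, entrywise convergence of $K\times K$ matrices is convergence in operator norm, so by Weyl's inequality the ordered eigenvalues converge: the $j$-th eigenvalue of $B_N/(NM(N))$ tends to $\lambda_j$, i.e.\ $\sigma_j^2(N)/(NM(N))\to\lambda_j$ for $j=1,\dots,K$, which is the claim. There is no real obstacle here—the only points requiring care are the identification of the nonzero spectrum of $AA'$ with that of the symmetric $K\times K$ matrix $B_N$ (handling possible rank-deficiency by allowing some $\sigma_j$ to vanish) and the invocation of Assumption \ref{A1}; both are routine.
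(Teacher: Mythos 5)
Your proof is correct and follows essentially the same route as the paper: both reduce the problem to the $K\times K$ matrix with entries $\sqrt{M_rM_s}\sum_j p_r(j)p_s(j)$, show it converges entrywise to $NM\,\mQ$ after rescaling via Assumption \ref{A1} and $M_r(N)/M(N)\to c_r$, and conclude by continuity of eigenvalues in fixed dimension $K$. The only cosmetic difference is that you identify the nonzero spectrum of $AA'$ with that of $B_N$ via the $XY$-versus-$YX$ identity and a similarity transform, whereas the paper constructs the corresponding eigenvectors explicitly.
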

\begin{proof}

Consider the sequence of $K\times K$ matrices $\mB_N$ with entries
\begin{equation}\label{Qhat}
[\mB_N]_{r,s}=\frac{\sqrt{M_rM_s}}{MN}\sum_{j=1}^N p_r(j)p_s(j).
\end{equation}
(Recall that $M_r=M_r(N)$ is a function of $N$.)
Since $\mB_N\to \mQ$ entrywise,   its eigenvalues $\la_1(\mB_N),\dots,\la_K(\mB_N) $ converge to $\la_1,\dots,\la_K$.
 However, $\la_r(\mB_N)=\frac{\sigma_r^2(N)}{NM}$ for all $r\in\{1,\dots,K\}$.
 To see this,
denote $U_k=\frac{1}{\sqrt{M_k}}E_k$. Then
$$\left(\sum_{r=1}^K E_r P_r'\right)\left(\sum_{s=1}^K E_s P_s'\right)'
=NM\sum_{r,s=1}^K [\mB_N]_{r,s}U_rU_s'$$
Let now $\vec{x}=[x_1,\dots,x_K]'$ be an eigenvector corresponding to eigenvalue $\la$ of $\mB_N$. Then $\vec{y}=\sum_{r=1}^K x_r U_r\in\RR^M$ is an eigenvector  of
$\sum_{r,s=1}^K [\mB_N]_{r,s}U_rU_s'$ with the same $\la$. So $NM\la$ is the square of a singular value of   $\sum_{r=1}^K E_r P_r'$.
Note that orthogonal vectors $\vec{x}$ correspond to orthogonal $\vec y$, so this procedure exhausts the first $K$ eigenvalues, even if they are repeated or $0$; the remaining $M-K$ eigenvalues are zero .
\end{proof}

\begin{lemma}\label{L:norm} With probability one, as $M/N\to c>0$,
$$\limsup_{N\to\infty}\frac{1}{\sqrt{M}+\sqrt{N}}\|\mV_N\|\leq \sqrt{\frac{1+F}{2}}.$$
\end{lemma}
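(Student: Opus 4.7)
The entries $[\mV_N]_{i,j}$ are, by construction, centered and independent (the allele frequencies $p_r(j)$ are treated as fixed deterministic numbers in the setup above). Since $[\mC_N]_{i,j}\in\{0,1,2\}$ with mean $2p_r(j)\in[0,2]$, each entry of $\mV_N$ satisfies $|[\mV_N]_{i,j}|\le 2$; its variance equals that of $[\mC_N]_{i,j}$, which is $2(1+F_{r,j})p_r(j)(1-p_r(j))\le (1+F)/2$ by the computation already recorded after \eqref{Hardy-Weinberg3}.

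The plan is to invoke a Bai--Yin type upper bound for the operator norm of a rectangular random matrix: if $\mV$ is an $M\times N$ array of independent, centered entries that are uniformly bounded in absolute value and whose variances are bounded above by $\sigma^2$, then whenever $M/N\to c>0$,
\[
\limsup_{N\to\infty}\frac{\|\mV\|}{\sqrt{M}+\sqrt{N}}\le \sigma \quad\text{almost surely.}
\]
Applying this with $\sigma^2=(1+F)/2$ yields the lemma at once. The classical statement for i.i.d.\ entries is in the Bai--Silverstein text already cited in the paper (Chapter 5 of \cite{bai2010spectral}).

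The main obstacle is justifying this upper bound when the entries are only \emph{independent} (not identically distributed) and the variances are only bounded above by $\sigma^2$, rather than equal to $\sigma^2$. I would handle this by the classical trace/moment method: for a sequence $k_N\to\infty$ chosen slowly (for instance $k_N\asymp (\log N)^{1/2}$), expand
\[
\E\operatorname{tr}\bigl((\mV_N\mV_N')^{k_N}\bigr)
\]
as a sum over closed walks on the bipartite vertex set of $M$ rows and $N$ columns. Independence collapses each walk's expectation to a product of pair moments $\E[V_{ij}^{2m}]$; the uniform bound $|V_{ij}|\le 2$ controls higher moments as a constant times $\sigma^2$, and because each pair contribution is bounded by $\sigma^2$, the counting reduces to the i.i.d.\ combinatorics of Yin--Bai--Krishnaiah and gives
\[
\E\operatorname{tr}\bigl((\mV_N\mV_N')^{k_N}\bigr)\le(1+o(1))\,M\,\sigma^{2k_N}(\sqrt{M}+\sqrt{N})^{2k_N}.
\]
Markov applied to $\|\mV_N\|^{2k_N}\le\operatorname{tr}((\mV_N\mV_N')^{k_N})$, together with Borel--Cantelli along a polynomial subsequence in $N$ and monotonicity of $\|\mV_N\|$ between subsequence terms (using the assumption $M_r(N+1)\ge M_r(N)$), delivers the almost sure upper bound. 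An alternative route uses Talagrand concentration for the operator norm, which is $1$-Lipschitz in the Frobenius norm on bounded entries, combined with an expectation bound obtained via symmetrization and a Slepian/Gordon comparison; this side-steps the heaviest combinatorics but still requires pinning down the median of $\|\mV_N\|$ to leading order.
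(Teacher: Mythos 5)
Your reduction of the lemma to a Bai--Yin type upper bound for independent, bounded, non-identically distributed entries with variances merely bounded above by $\sigma^2=(1+F)/2$ is a valid strategy, but it is a genuinely different route from the paper's. The paper does not redo the moment combinatorics: it first \emph{equalizes} the variances by adding an independent two-valued perturbation $\mZ$ with $\E([\mV]_{r,j}+[\mZ]_{r,j})^2=(1+F)/2$, applies the already-available non-i.i.d.\ extension of \cite[Theorem 3.1]{Yin:1988} given in \cite[Theorem 3]{Couillet:2010} to $(\mV+\mZ)/(\sqrt{M}+\sqrt{N})$ (after constructing the required infinite array by interleaving the population blocks), and then removes $\mZ$ by convexity of the operator norm via Jensen's inequality and Fatou's lemma, since $\mV=\int(\mV+\mZ)\,\mu(d\mZ)$. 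That device is precisely how the paper converts ``variance equal to $\sigma^2$'' (what the cited theorem gives) into ``variance at most $\sigma^2$'' (what the lemma needs) without touching the combinatorics. Your route instead absorbs the inequality on the variances directly into the Yin--Bai--Krishnaiah walk counting; this works for the upper bound (only the lower bound needs matching variances) and is self-contained, at the cost of reproducing a long combinatorial argument that the paper outsources to a citation.

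Two caveats on your sketch. First, the choice $k_N\asymp(\log N)^{1/2}$ is too small: Markov plus Borel--Cantelli requires $M(1+\eps)^{-2k_N}$ to be summable (even along a polynomial or geometric subsequence, $e^{-c\sqrt{\log N}}$ decays slower than any power of $N$), so you need $k_N/\log N\to\infty$, or at least $k_N\geq C\log N$ with $C$ large, exactly as in \cite{Yin:1988}; with that correction the interpolation between subsequence terms using monotonicity of the unnormalized norm in $M$ and $N$ goes through. Second, the proposed Talagrand/Gordon alternative does not deliver the sharp constant: Slepian--Gordon comparison is a Gaussian tool, and passing from bounded entries to Gaussians through symmetrization costs a multiplicative factor, so you would not recover the leading-order constant $\sqrt{(1+F)/2}$ that the lemma (and the threshold \eqref{Threshold}) actually requires. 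Stick with the moment method, or with the paper's variance-padding plus Jensen--Fatou argument.
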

\begin{proof}
We apply a non-i.i.d. version of \cite[Theorem 3.1]{Yin:1988},
which was  extended in \cite[Theorem 3]{Couillet:2010} to allow for the distributions of the entries to vary. Specifically, we apply \cite[Eqtn. (94)]{Couillet:2010}.
To do so, we need to verify that the entries of matrix $\mV/\sqrt{(1+F)/2}$  satisfy assumptions (1)-(6) in part A of the proof on  \cite[page 2437]{Couillet:2010}.

Conditions (1) and (3) hold by assumption.
The entries of   matrix  $\mV/\sqrt{(1+F)/2}$   are uniformly bounded in absolute value by  the constant $C=2\sqrt{2}/\sqrt{1+F}$. Thus with  $\eta_n=C/\sqrt{n}$, condition (2) is verified.
 Condition (6) is then automatically satisfied with $c=C^3$.
For $i$ from the $r$-th block $M_1+\dots+M_{r-1}< i\leq  M_1+\dots+M_{r}$ equations \eqref{Hardy-Weinberg}-\eqref{Hardy-Weinberg3}  give  \begin{equation}
\label{NoLink}
\E\left([\mV]_{i,j}^2\right) =2(1+F_{r,j})p_{r,j}(1-p_{r,j})\leq ( 1+F)/2,
\end{equation}
So   condition (4) holds.

To verify assumption (5) we use the pointwise  estimate $\left|[\mV]_{i,j}\right|^\ell< 2^{\ell-1}\left|[\mV]_{i,j}\right|  $ followed by the Cauchy-Schwartz inequality:
$$\E\left(\left|[\mV]_{i,j}\right|^\ell\right)< 2^{\ell-1} E\left(|[\mV]_{i,j}|\right)\leq 2^{\ell-1} (E|[\mV]_{i,j}|^2)^{1/2}\leq 2^{\ell-3/2} \sqrt{1+F}.$$
So
$$\frac{\E\left(\left|[\mV]_{i,j}\right|^\ell\right)2^{\ell/2}}{(1+F)^{\ell/2}}
\leq
2^{\ell-1}2^{(\ell-1)/2}/(1+F)^{(\ell-1)/2}=
\left(2\sqrt{2}/\sqrt{1+F}\right)^{\ell -1}$$
Recalling that    $2\sqrt{2}/\sqrt{1+F}=\eta_n\sqrt{n}$, this becomes condition (5) for the matrix  $\mV/\sqrt{(1+F)/2}$.

By Borel-Cantelli lemma, \cite[Eqtn. (94)]{Couillet:2010}  implies that $\limsup_{N\to\infty}\frac{\|\mV\|}{\sqrt{M}+\sqrt{N}}\leq\sqrt{(1+F)/2}$.

\end{proof}

\begin{proof}[Proof of Theorem \ref{T1}]
This part of the proof is similar to \citep{Silverstein:1994}.
 In \eqref{Eq:perturb}, we consider $\mC_N$  as a small
perturbation of  the finite rank matrix $2\sum_{r=1}^K E_r P_r'$.

Denote by $\tau_1(N)\ge \dots \ge\tau_K(N)$ the  largest (deterministic) singular values of $$\frac{2}{\sqrt{N}+\sqrt{M}}\sum_{r=1}^K E_r P_r',$$ and set $\tau_j(N)=0$ for $j>K$.
Then it is known, see e.g. \cite[Theorem 3.3.16(c)]{horn1994topics}, that
the singular values $\sqrt{\La_j}$ of $\frac{1}{\sqrt{N}+\sqrt{M}}\mC$, written in decreasing order, differ by  at most
$\frac{1}{\sqrt{N}+\sqrt{M}}\|\mV_N\|$ from the corresponding singular values  $\tau_j(N)$, written in decreasing order.

Since $\tau_j(N)= 0$ for $j> K$,  from Lemma \ref{L:norm} we get  \eqref{MP}.

Lemma \ref{L:asymp} shows  that %

$$\tau_j(N) \left(\frac{1}{\sqrt{M}}+\frac{1}{\sqrt{N}}\right)=\frac{\tau_j(N) (\sqrt{M}+\sqrt{N})}{\sqrt{MN}}=\frac{2\sigma_j(N)}{\sqrt{MN}}\to 2 \sqrt{\la_j}$$

 for $1\leq j\leq K$,
 so
 $$\sqrt{\La_j} \left(\frac{1}{\sqrt{M}}+\frac{1}{\sqrt{N}}\right)$$ has the same limit, and by taking squares of both sides we get \eqref{Large_La}.
\end{proof}

\subsection{An estimate of the sample size needed for significance}  %
Consider an example of two samples, each of size $M/2$, from two subpopulations with the same values $m_{1,1}=m_{2,2}$ in \eqref{kkk}. Then the smaller eigenvalue can be computed explicitly,  $\la_2=\lim_{N\to\infty}\frac{1}{2N}\sum_{j=1}^N(p_1(j)-p_2(j))^2$. So  \eqref{eta} gives  \begin{equation}\label{Transition}
\lim_{N\to\infty}\frac{1}{N}\sum_{j=1}^N(p_1(j)-p_2(j))^2>\frac{(1+\sqrt{d})^2}{4M}, \;  d=M/N
\end{equation}
as the bound for $M$ to detect the population structure.
This shows that any   difference can be detected  by increasing the number of individuals $M$, but it is also of interest to note that if $M$ is too small, then the  increase in  $N$ has only a  limited benefit of reducing the value of $(1+\sqrt{d})^2$ from $4$ to $1$.  Thus, a too-small value of $M$ cannot  be compensated for by increasing the number of markers $N$.  %

Simulations confirm  that   for a fixed $N$, the probability of detecting $K$    rises sharply from 0 to $1$ as $M$ increases.  For a given size of the data matrix, as measured by constant value of the product $M N$, simulations  indicate that when $M\leq N$ the larger values of $M$ increase the power in model  \eqref{Eq:perturb}, so this model behaves differently than the Wishart model discussed in  \cite[page 2083]{Patterson:2006}.
\subsection{Centered data have $K-1$ large eigenvalues}\label{Sect:Centering}

In this section we point out that our basic conclusions with appropriate modifications can be used to justify that for a  matrix $\bar{\mC}$ with centered columns  $K$ subpopulations  correspond to $K-1$  large singular values instead of $K$ large eigenvalues as    in Theorem \ref{T1}.
That is, for centered data  \eqref{mC-centered} coming from $K$  well separated  subpopulations, the centered  matrix $\bar{\mC}/(\sqrt{M}+\sqrt{N})$ has $K-1$ large singular values that grow without bound while  the remaining singular values remain bounded and are smaller than $\sqrt{2(1+F)}$.

This can be seen by adapting the arguments that we used in the proof of Theorem \ref{T1}. Denote by $\mathbf{1} $  the $M$ dimensional column vector consisting of all one's. The average of the columns  of
 $\mC_N = \mV_N + 2 \sum_{r=1}^K E_rP_r'$ is
 $$
 \frac{1}{M}\mathbf{1}'\mV_N+ \frac{2}{M} \mathbf{1}' \sum_{r=1}^K E_rP_r'
 $$
 So the centered matrix is
 \begin{equation}\label{mC-centered}
 \bar{\mC}_N= \left( \mV_N-\frac{1}{M}\mathbf{1}\mathbf{1}'\mV_N\right)+ 2 \sum_{r=1}^K \left(E_r-\frac{M_r}{M}\mathbf{1}\right)P_r',
\end{equation}

As in the proof of Theorem \ref{T1}  we interpret  $\bar{\mC}_N$  as a random perturbation of the finite rank matrix $2 \sum_{r=1}^K \left(E_r-\frac{M_r}{M}\mathbf{1}\right)P_r'$. We use the triangle inequality to bound the norm of the perturbation:
$$
\left\|\left( \mV_N-\frac{1}{M}\mathbf{1}\mathbf{1}'\mV_N\right)\right\|\leq 2\|\mV_N\|
$$
so in the limit the singular values of  $\frac{1}{\sqrt{M}+\sqrt{N}}\bar{\mC}_N$ differ by at most $ \sqrt 2\sqrt{1+F}
$ from the singular values of  matrix
$$A=\frac{2}{\sqrt{M}+\sqrt{N}} \sum_{r=1}^K \left(E_r-\frac{M_r}{M}\mathbf{1}\right)P_r'$$
We note that since $\mathbf{1}=\sum_{r=1}^KE_k$ we can rewrite
$$A = \frac{2}{\sqrt{M}+\sqrt{N}} \sum_{r=1}^K  E_r  \bar P_r' $$
where
$$\bar P_r=P_r- \sum_{s=1}^K \frac{M_s}{M}P_s.$$
Indeed,
\begin{multline*}
\sum_{r=1}^K \left(E_r-\frac{M_r}{M}\mathbf{1}\right)P_r'=\sum_{r=1}^K  E_rP_r' -\sum_{r=1}^K \frac{M_r}{M}\mathbf{1} P_r' \\
= \sum_{r=1}^K  E_rP_r' -\frac{M_r}{M}\sum_{r=1}^K\left(\sum_{s=1}^KE_s\right)P_r'
 =\sum_{r=1}^K  E_rP_r' -\sum_{s=1}^K\sum_{r=1}^K\frac{M_r}{M}E_sP_r'\\
  =\sum_{r=1}^K  E_rP_r' -\sum_{r=1}^K\sum_{s=1}^K\frac{M_s}{M}E_rP_s' =\sum_{r=1}^K  E_r\left(P_r' - \sum_{s=1}^K\frac{M_s}{M} P_s'\right)
\end{multline*}
 (We changed the order of summation in the second line, and swaped $r,s$ in the third line.)

We note that vectors $\bar P_1,\dots,\bar P_{K-1}$ are linearly independent as together with $\sum_{s=1}^K \frac{M_s}{M}P_s$ they span the same subspace  of $\RR^N$ as vectors $P_1,\dots,P_K$. The last vector, $\bar P_K=-\sum_{s=1}^{K-1} \frac{M_s}{M_K}P_s$, is a linear combination of the other vectors.  This shows that  the deterministic matrix $A$ has rank $K-1$ and that it has the  form appropriate to apply
   Lemma \ref{L:asymp}. The argument used in the proof of Theorem \ref{T1} shows that when  matrix $\bar \mQ$ corresponding to the moments of $\{\bar p_r(j)\}$ has $K-1$ positive eigenvalues, the singular values of $\bar{\mC}_N$ separate into $K-1$ large eigenvalues and remaining small eigenvalues as claimed.

\subsection{Test for presence of population substructure}
\cite{Patterson:2006}  pioneered the use of the Tracy-Widom distribution to test for presence of population substructure using the singular values of individual data.
Here we indicate how such a test can be justified for the data matrix $\mC$ which is modeled as a random perturbation of  a finite rank matrix \eqref{Eq:perturb} rather than as a spiked covariance model.

Under the null hypothesis  $H_0: K=1$ matrix $\mC$  has independent entries and the entries in the $j$-th column  are   independent and identically distributed  with  the expected value $2p(j)$. Under the Hardy-Weinberg equilibrium, the variances of the entries in the $j$-th column are all equal to $4p(j)(1-p(j))$. So under the null hypothesis,  the centered and normalized  matrix
 \begin{equation}
   \label{tilde-C}
  \widetilde C_{i,j}=  \frac{C_{i,j}-2 p(j)}{\sqrt{2 p(j)(1- p(j))}}
 \end{equation}
 has independent entries with mean zero and variance 1.

Due to the normalization by $\sqrt{2 p(j)(1- p(j))}$, to apply mathematical theory we need to  assume additionally  that  $p(j)$ are bounded away from $0$ and from $1$. %
Mathematical theory requires also  that $M/N\to d\ne1,0,\infty$.  With these assumptions in place, we can now apply results from   \cite{pillai2012edge}. Denote by $f_{\alpha}$ the $1-\alpha$ percentile of the Tracy-Widom distribution and let  $\tilde \Lambda_1$ be the largest eigenvalue of  $\widetilde{\mC}\widetilde{\mC}'$. According to  \cite[Corollary 1.2]{pillai2012edge},
\begin{equation}
  \label{Tracy-Widom}
\Pr\left(\tilde \Lambda_1>(\sqrt{M}+\sqrt{N})^2+f_\alpha (\sqrt{M}+\sqrt{N})\left(\frac{1}{\sqrt{M}}+\frac{1}{\sqrt{N}}\right)^{1/3}\right)\to \alpha
\end{equation}
as $N\to\infty$. Thus for large $N$ we  reject the null hypothesis on the level of significance $\alpha$ when $\widetilde\La_1/(\sqrt{M}+\sqrt{N})^2$ exceeds the threshold
\begin{equation}\label{TW-cutoff}
  1+\frac{f_\alpha}{ (\sqrt{M}+\sqrt{N})}\left(\frac{1}{\sqrt{M}}+\frac{1}{\sqrt{N}}\right)^{1/3} .
\end{equation}

This is of course a  different, and in some ways more precise statement than \eqref{MP}, which for the normalized case  would have said that as $N\to\infty$ with probability one  $\tilde\La_1/(\sqrt{M}+\sqrt{N})^2\to 1$ by \cite[Theorem 3]{Couillet:2010}.
On the other hand,    in \eqref{Large_La} we give additional information about the case when $K\geq 2$, showing   that  $\La_1/(\sqrt{M}+\sqrt{N})^2\to \infty$ will eventually exceed  any fixed cutoff; in Section \ref{Sect:Centering} we point out that a similar conclusion is available for the  squares $\bar{\La}_1,\dots,\bar{\La}_{K-1}$  of the largest  the  largest $K-1$  singular values of  the  centered matrix \eqref{mC-centered}.

Normalization \eqref{tilde-C} uses unknown allelic probabilities $p(j)$.  It is natural to conjecture that when  $p(j)$ are replaced by their estimates
\begin{equation}\label{phat}
\hat p(j)=\frac{1}{2M}\sum_{ i=1}^M C_{i,j}
\end{equation}
 then the largest singular value  of the resulting matrix  $\hat{\mC}$ still has the same limit   \eqref{Tracy-Widom}. But under such normalization, the entries of the matrix become dependent so this has not been worked out with mathematical rigor.
It is also natural to conjecture that when several eigenvalues   of   $\widetilde{\mC}\widetilde{\mC}'$ or of  $\hat{\mC}\hat{\mC}'$ exceed the Tracy-Widom threshold on the right hand side of \eqref{Tracy-Widom}, then the number of subpopulations is one more than the number of such eigenvalues. However, such a result is at present not available in the context of perturbed finite rank matrices as in \eqref{Eq:perturb}.

A more refined test statistic that compensates for LD  by reducing $N$, supported by simulations, is discussed in \cite{Patterson:2006}.

\section{Conclusion}
 Eigenvalues of the uncentered covariance matrix $\mC\mC'$ larger than the theoretical threshold \eqref{Threshold}, when combined with overall histogram of eigenvalues, are a consistent indicator of the presence of subpopulations in the data. We demonstrate in two proof-of-principle simulations that we are able to obtain evidence of population structure when the number of individuals is large enough. \textcolor{black}{Our theory clearly shows that the largest singular values for well-separated populations, assuming sufficient dataset size, are an order of magnitude larger than the non-significant smaller eigenvalues. Our results underscore the utility of PCA for estimating the number of subpopulations in a dataset. }

 Our estimate \textcolor{black}{of the number of subpopulations in a dataset} is conservative, and we never obtain evidence of more subpopulations than present in the simulations. As expected, we encounter a loss of power (false negatives) in small simulated data sets. \textcolor{black}{Our theoretical derivations provide a formula that describes the relationship between} accuracy in estimating the number of subpopulations, $K$, \textcolor{black}{and}  the number of individuals $M$ in the sample    and on the number $N$ of markers. \textcolor{black}{For the typical current practice, where} the number of markers often exceeds the number of individuals (i.e., $M\leq N$), \textcolor{black}{our formula shows that increasing the number of individuals, $M$, is the primary way of improving the resolution of PCA to distinguish subpopulations}.
 \textcolor{black}{Lastly, our examination of the distribution of non-significant eigenvalues indicates that departures from the assumptions of the theory, such as cryptic relatedness among individuals, affects the shape of the histogram of small eigenvalues. }

\section*{Acknowledgements}
KB gratefully acknowledges support by the National Institutes of Health under Ruth L. Kirschstein National Research Service Award \#5F32HG006411.
The research of WB  was partially supported by NSF grant \#DMS-0904720.
We thank the referees and editor for helpful comments that  improved the paper.


\end{document}